\def \hh{\vskip 0.5\baselineskip \hbox to \hsize}
\newtheorem{definition}{Definition}
\newtheorem{theorem}{Theorem}
\newcommand{\mb}{\mathbf}
\begin{document}

\title{Cascading Link Failure in the Power Grid: \\A Percolation-Based Analysis}
\author{ {\large Hongda Xiao and Edmund M. Yeh} \\
{\normalsize Department of Electrical Engineering, Yale University, New
Haven, CT 06511}\\
{\normalsize $\{$hongda.xiao, edmund.yeh$\}$@yale.edu} }
\maketitle

\begin{abstract}
Large-scale power blackouts caused by cascading failure are inflicting enormous socioeconomic costs. 
We study the problem of cascading link failures in power networks modelled by random geometric graphs
from a percolation-based viewpoint.  To reflect the fact that links fail according to the amount of power flow going through them, we introduce a model where links fail according to a probability which depends on the number of neighboring links.  We devise a mapping which maps links in a random geometric graph to nodes in a corresponding dual covering graph.  This mapping enables us to obtain the first-known analytical conditions on the existence and non-existence of a large component of operational
links after degree-dependent link failures.   Next, we present a simple but descriptive model
for cascading link failure, and use the degree-dependent link failure results to obtain the first-known analytical conditions on the 
existence and non-existence of cascading link failures. 
\end{abstract}

\section{Introduction}

In August 2003, a massive power blackout rolled across northeastern U.S. and Canada,
affecting some 55 million people.  The blackout resulted from a three-hour-long 
cascading failure triggered by an undetected initial power line failure in Ohio.  Later 
that year, 57 million Italians were left in the dark from a similar blackout.  Around
the world, the number of outages affecting large populations has steadily risen
in the past decade, causing enormous economic losses~\cite{Blackout}.  As
a consequence, the resilience and reliability of power systems have received
increasing attention.  Designing and constructing a smart, resilient,
and self-healing power grid has become a crucial challenge for 
governments and industry.  

The essential dynamic of power blackouts is a process whereby
the failure of a generating plant or a power line results in redistribution of the
power flow load onto other nearby generators or power lines.  If these other
plants and lines then fail due to excessive load, then this process can further
propagate and result in a {\em cascading failure}.  
A key question for the designer of the power grid is: 
can we predict whether a small number of initial failures will or will not trigger 
a cascading failure affecting
the entire network? 

Previous work has examined the problem of cascading failure from a number
of perspectives.  In~\cite{Dobson_initial}, the authors use a linear model for
power flows.  They model transmission line failures as the result of overflow,
and consider the dynamics of power networks in time. In \cite%
{Dobson_complex}, the authors include the growth of power demand, 
engineering responses to system failures, and the upgrading of generator
capacity. They characterize two
types of blackouts, one which is due to transmission lines reaching load
limits but incurring no line outages, the other which is due to multiple line
outages. They use simulations to illustrate how their proposed model fit the data
from American blackouts. 
In \cite{Dobson_cascading}, the authors
characterize two kinds of critical operation points of the power system, 
due to the transmission line limit and the generator capacity limit,
respectively. They find by simulation that operation near critical
points could produce power law tails in the blackout size probability
distribution. In \cite{IntegerProgramming}, the authors propose a method to prevent 
cascading power failure by using integer programming techniques. They
consider the cost of upgrading the capacity of transmission lines and
calculate the optimal choice of link upgrade to prevent the expansion of
failure.   As seen by the brief review above, the current research on cascading
failures in power networks is primarily based on simulation and optimization algorithms. 
Due to the complex nature of the dynamics involved in cascading failures,
analytical results have been rather elusive. 

In~\cite{Zhening}, the authors propose a new approach to analyzing cascading failure
based on the theory of percolation~\cite{Gilbert,Grimmett,Penrose,interference,Zhening2}.  The perspective adopted in~\cite{Zhening} is as follows:
in a power network subject to redistribution of load, 
the presence of cascading failure can be assessed by whether or not a small number of initial node
(generator) or link (line) failures lead to 
a large connected component of failed nodes or links affecting the network {\em globally}. 
Percolation theory, which is concerned with assessing the global connectivity of networks,
provides the appropriate viewpoint with which to make this assessment. 
The authors of~\cite{Zhening} propose a simple but descriptive model for cascading node failure,
based on node susceptibility thresholds and degree-dependent node interactions.  Within the context of 
random geometric graphs,~\cite{Zhening} presents the first analytical conditions for the
existence and non-existence of cascading node failures.  

In this paper, we adopt the viewpoint presented in~\cite{Zhening}.  Instead of examining cascading failure
of nodes, however, we focus on cascading {\em link} failures.  This is motivated by the fact that blackouts in 
power grids are often triggered by link failures (line faults), which cause power flow redistribution onto neighboring
links, which when overburdened, can fail and propagate the cascade.   As in~\cite{Zhening}, we shall
focus on networks modelled by {\em random geometric graphs}, where
nodes are distributed spatially according to a Poisson point process with constant density $\lambda$, and
any two nodes within given distance $r$ of each other share a link.  In doing so,
we are motivated by the following consideration.  First, the topology of actual power grids is not readily
available.  Second, we use a random graph to model the ensemble topology of power grids. 
Third, there is evidence that geometry plays an important role in the topology of electrical power grids~\cite{Topology},
especially with respect to cascading dynamics.
Finally, our work is intended to provide insight into the dynamics of cascades within large-scale power networks, and
random geometric graphs provide a concrete setting in which analytical results may be obtained.  

This paper proceeds as follows.  After presenting some preliminary results on random
geometric graphs and continuum percolation, we study random geometric graphs in which each given link fails
(independently) with a probability which depends on the number of neighboring links (links which share an end
node with the given link).  This models the phenomenon in power networks where the probability of a link failure 
increases with its load 
in terms of power flow.   The larger the number of other links which share an 
end node with a given link, the more power flow is likely to traverse the given link,
and therefore the higher the probability that the given link fails. 
We present the first-known analytical conditions on the existence and non-existence of a large component of operational
links after degree-dependent link failures.  In order to establish these results, we devise a new map which maps links in a
random geometric graph to corresponding nodes in a dual covering graph.  Finally, we present a simple but descriptive model
for cascading link failure, and use the degree-dependent link failure results to obtain the first-known analytical conditions on the 
existence and non-existence of cascading link failures.


\section{Network Model}

\label{graph}

We use a random geometric graph to model the power network. A rigorous model is as follows.
Let $\mathcal{H}_{\lambda}$ be 
a homogeneous Poisson point process with density $\lambda > 0$ in $\mathbb{R}^2$.    Let
 $G(\mathcal{H}_{\lambda},r)$ be the (infinite) undirected graph with vertex set $\mathcal{H}_{\lambda}$
and with undirected links connecting all pairs $\{{\mb X}_i, {\mb X}_j\}$ with $\|{\mb X}_i- {\mb
X}_j \|\leq r$ $(r>0)$ where ${\mb X}_i, {\mb X}_j \in \mathcal{H}_{\lambda}$. 
Due to the scaling property of random geometric
graphs~\cite{Penrose}, in the following, we focus on $G(\mathcal{H}_{\lambda},1)$.

A fundamental result for random geometric graphs 
concerns a phase transition effect whereby the macroscopic behavior of the network is very different for
densities below and above some critical value $\lambda_c$. For $\lambda<\lambda_c$ (subcritical or non-percolated), the
connected component containing the origin contains a finite number of points almost surely.
For $\lambda>\lambda_c$ (supercritical or percolated), the connected component containing the origin 
contains an infinite number of points with a positive probability~\cite{Gilbert, Penrose, Grimmett}.

Let $\mathcal{H}_{\lambda,\mathbf{0}}=\mathcal{H}_{\lambda}\cup \{\mathbf{0}\}$, i.e., the union
of the origin and the homogeneous Poisson point process with density $\lambda$.\footnote{Note that
in a random geometric graph induced by a homogeneous Poisson point process, the choice of the
origin can be arbitrary.} As discussed, a phase transition takes place at the critical
density. 

\vspace{0.1in}%
\begin{definition} For $G(\mathcal{H}_{\lambda,\mathbf{0}},1)$, the
\emph{percolation probability} $p_{\infty}(\lambda)$ is the probability that the connected component
containing the origin has an infinite number of nodes of the graph. The \emph{critical density}
$\lambda_c$ is defined as
\begin{equation}
\lambda_c=\inf \{\lambda>0: p_{\infty}(\lambda)>0\}.
\end{equation}
\end{definition}
\vspace{0.1in}%

It is known that if $\lambda>\lambda_c$, then there exists a unique infinite component in $G(\mathcal{H}_{\lambda},1)$.
A fundamental result of continuum percolation states that $0<\lambda_c<\infty$~\cite{MeRo96}. Exact values of
$\lambda_c$ and $p_{\infty}(\lambda)$ are not yet known. Simulation studies show that~$1.43<\lambda_c<1.44$

We have defined the random geometric graph in the infinite graph setting.  The finite analog of $G(\mathcal{H}_{\lambda},1)$
can be defined as follows~\cite{Penrose}.  Let ${\cal X}_n = \{{\mathbf X}_1, \ldots, {\mathbf X}_n\}$ be $n$ points independently and uniformly
distributed in a square of area $n/\lambda$ in ${\mathbb R}^2$.  Form a graph $G({\cal X}_n, 1)$ by placing a link between any 
two nodes $i$ and $j$ such that $\|{\mathbf X}_i - {\mathbf X}_j\| \leq 1$.  Let $L_1(G({\cal X}_n, 1))$ be the size of the
largest connected component in $G({\cal X}_n, 1)$.  Then it can be shown that as $n \rightarrow \infty$ with $\lambda$ fixed, 
$n^{-1} L_1(G({\cal X}_n, 1)) \rightarrow p_{\infty}(\lambda)$ in probability.  Thus, if $\lambda>\lambda_c$, there exists
a unique largest component of size $\Theta(n)$ in $G({\cal X}_n, 1)$.  On the other hand, it can be shown that if
$\lambda< \lambda_c$,  then the largest component in $G({\cal X}_n, 1)$ can only have size $O(\log n)$~\cite{Penrose}.

While finite random geometric graphs are clearly of more interest for practical applications, it is more convenient to 
consider infinite graphs for expositional purposes.  For this reason, we will concentrate on infinite graphs in the rest of the paper,
while keeping in mind the correspondence between the finite and infinite graphs.

\section{Degree-dependent Random Link
Failures}

In the power grid, large-scale blackouts are often triggered by power line faults.  This was 
the case in the massive 2003 North American blackout.  In our context, line faults can be modeled
as link failures.  Thus, in order to further understand blackouts in the power grid, it is important
to characterize the resilience of large-scale networks to link failures. 
  
In power networks, the probability of link failure typically increases with its load 
in terms of power flow.   The larger the number of other links which share an 
end node with a given link, the more power flow is likely to traverse the given link,
and therefore the higher the probability that the given link fails.  To model this effect,
we consider degree-dependent random link failures.  We first define the degree of a link.

\begin{definition}
The degree of link $(i,j)$, denoted by $d(i,j)$, is the number of other links which share
an end vertex with $(i,j)$.  That is, 
$$	d(i,j) = d_i + d_j -2, $$
where $d_i$ and $d_j$ are the node degrees of $i$ and $j$. 
\label{def:degree}
\end{definition}

Consider a scenario where each {\em link} in $G(\mathcal{H}_{\lambda },1)$ fails (independently) with a probability $q(k)$ that depends
on the link's degree $k$.  Let $G(\mathcal{H}_{\lambda },1,q(\cdot))$ be the remaining graph after degree-dependent
link failures.  In order to study the connectivity properties of $G(\mathcal{H}_{\lambda },1,q(\cdot))$, we construct a mapping
where the bond percolation process in $G(\mathcal{H}_{\lambda },1)$ is mapped to a site percolation process in
a {\em covering graph} $G_c(\mathcal{H}_{\lambda },1)$.




The covering graph of a graph $G$ is defined as follows.

\begin{definition}
$G_{c}(V,E)$ is called the \textit{covering graph} of $G(V,E)$ if
\begin{enumerate}
\item Each node in $G_c$ denotes a link in $G$, and each link in $G$ is denoted by a unique node in $G_c$.
\item Two nodes in $G_c$ share a link if and only if the corresponding links in $G$ share a common end vertex (in $G$).
\end{enumerate}
\label{def:covering}
\end{definition}

Comparing Definitions~\ref{def:degree} and~\ref{def:covering}, we have
the convenient fact that the degree of a link in $G$ is equal to the degree of its corresponding node
in $G_c$.  


It should be clear that the covering graph of a given graph is not unique.
Moreover, in the context of geometric graphs, 
most covering graphs are of little importance, for they hold only
the connection information, losing all the geometric information.  In order
to utilize the geometric information in the original graph, we need to place
additional constraints on the geometric location of nodes in the covering graph.
Fortunately, we find a construction for the covering graph which is particularly
useful for random geometric graphs with random link failure.  In this 
construction, the location of nodes in the covering graph is directly determined
by the node locations in the original graph, thereby maintaining the geometric
information. 

Given $G(\mathcal{H}_\lambda, 1)$, we construct the covering graph
$G_c( \mathcal{H}_\lambda, 1)$ as follows.
\begin{enumerate}
\item Consider the circles with radius $1/2$ and the nodes in $G( \mathcal{H}_\lambda, 1)$ as the centers.
\item Let the mid-points of each link in $G( \mathcal{H}_\lambda, 1)$ be the nodes in $G_c( \mathcal{H}_\lambda, 1)$. 
\item Two nodes in $G_c( \mathcal{H}_\lambda, 1)$ share a link if and only if they lie in the same circle of radius 1/2 in $G( \mathcal{H}_\lambda, 1)$.
\end{enumerate}
The construction is illustrated in Figure~\ref{fig:covering}.
\begin{figure}[h]
\centering
\includegraphics[scale = 0.7]{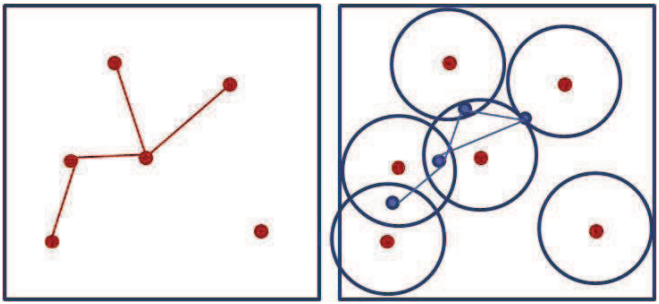}
\caption{Construction of the covering graph: the red nodes and links belong to $G( \mathcal{H}_\lambda, 1)$;
the blue nodes and links belong to $G_c( \mathcal{H}_\lambda, 1)$.}
\label{fig:covering} 
\end{figure}

\begin{theorem}
The graph $G_c( \mathcal{H}_\lambda, 1)$ constructed above is a
covering graph of $G( \mathcal{H}_\lambda, 1)$.
\end{theorem}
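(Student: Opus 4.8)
The plan is to verify directly that the constructed graph meets the two conditions of Definition~\ref{def:covering} with respect to $G(\mathcal{H}_\lambda,1)$. I will write $m_e$ for the node of $G_c(\mathcal{H}_\lambda,1)$ sitting at the midpoint of a link $e$ of $G(\mathcal{H}_\lambda,1)$, and for a node $v$ of $G$ I let $D_v$ be the closed disk of radius $1/2$ about $v$ (the disk whose boundary is the circle used in Step~1). By Step~2 there is exactly one node of $G_c$ for each link of $G$, so Condition~1 amounts to checking that distinct links get distinct midpoints: two links $\{{\mb X}_i,{\mb X}_j\}$ and $\{{\mb X}_k,{\mb X}_l\}$ with $\{i,j\}\neq\{k,l\}$ have the same midpoint only if ${\mb X}_i+{\mb X}_j={\mb X}_k+{\mb X}_l$, and since $\mathcal{H}_\lambda$ is a homogeneous Poisson process, almost surely no four of its points obey this linear relation. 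Hence the midpoint map is a.s.\ a bijection between links of $G$ and nodes of $G_c$, which also makes the earlier remark ``degree of a link in $G$ $=$ degree of the corresponding node in $G_c$'' automatic from Step~3.

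For Condition~2 I would argue the two implications separately. Suppose first that links $e_1$ and $e_2$ share an end vertex $v$. Every link of $G$ has Euclidean length at most $1$, so its midpoint lies within distance $1/2$ of each of its endpoints; in particular $m_{e_1}\in D_v$ and $m_{e_2}\in D_v$. Thus $m_{e_1}$ and $m_{e_2}$ lie in a common radius-$1/2$ circle of the construction, and Step~3 joins them by a link in $G_c$. This gives the ``if'' part of Condition~2.

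The converse is where the substance lies, and I expect it to be the main obstacle. Here one is given that $m_{e_1}$ and $m_{e_2}$ both lie in $D_v$ for some node $v$ of $G$ and must conclude that $v$ is a common endpoint of $e_1$ and $e_2$; equivalently, the whole theorem comes down to the geometric claim that the midpoint of a link $e=\{{\mb X}_i,{\mb X}_j\}$ lies in $D_v$ \emph{only when} $v\in\{{\mb X}_i,{\mb X}_j\}$. Granting that claim, $m_{e_1}\in D_v$ forces $v$ to be an endpoint of $e_1$ and $m_{e_2}\in D_v$ forces $v$ to be an endpoint of $e_2$, so $e_1\cap e_2\ni v$ and the ``only if'' direction follows. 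I would therefore concentrate the work on pinning down exactly how a link's midpoint can sit relative to the radius-$1/2$ disk around a node, being careful about configurations in which a midpoint happens to be geometrically close to a node other than its two endpoints; this is the delicate point on which the argument turns. With both implications of Condition~2 established and Condition~1 already in hand, $G_c(\mathcal{H}_\lambda,1)$ satisfies Definition~\ref{def:covering} and is a covering graph of $G(\mathcal{H}_\lambda,1)$.
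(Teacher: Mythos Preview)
Your plan mirrors the paper's---verify the two conditions of Definition~\ref{def:covering}, with the ``if'' half of Condition~2 immediate from links having length at most $1$---and you are in fact more careful than the paper on Condition~1, where you note that distinct links have distinct midpoints almost surely.

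The gap is in the ``only if'' direction. The geometric claim you isolate---that $m_e\in D_v$ forces $v$ to be an endpoint of $e$---is \emph{false}, and not merely on a null set. Take nodes at $(0,0)$, $(1,0)$, $(0,0.2)$, $(1,0.2)$, and $v=(0.5,0.1)$. The links $e_1=\{(0,0),(1,0)\}$ and $e_2=\{(0,0.2),(1,0.2)\}$ have midpoints $(0.5,0)$ and $(0.5,0.2)$, both at distance $0.1<1/2$ from $v$, yet $e_1$ and $e_2$ share no endpoint. Under Step~3 read literally, $m_{e_1}$ and $m_{e_2}$ would then be joined in $G_c$, violating Condition~2. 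Five-point configurations of this type occur with positive probability in every bounded region of a homogeneous Poisson process, so no almost-sure argument can salvage the claim you propose to prove. The paper's own proof does not actually confront this direction either: it invokes ``length of links $\le 1$'' and then asserts the full biconditional, but that hypothesis only delivers the ``if'' part. In effect the paper is treating Step~3 as a picture rather than a definition, with adjacency in $G_c$ really \emph{defined} by the covering-graph rule; under that reading the theorem is a tautology, while under the literal reading of Step~3 the constructed graph is a strict supergraph of the line graph and neither your argument nor the paper's can close the gap.
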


\begin{proof}
First, each link in $G(\mathcal{H}_{\lambda },1)$ corresponds to a unique
node in $G_{c}(\mathcal{H}_{\lambda },1)$, which is the link's
midpoint. Second, since the length of links in $G(\mathcal{H}_{\lambda
},1) $ is less than or equal to $1$, two nodes in $G_{c}(\mathcal{H}_{\lambda
},1)$ lie in the same circle (of radius 1/2) in $G( \mathcal{H}_\lambda, 1)$
if and only if their corresponding links in $G(\mathcal{H}_{\lambda },1)$ 
share an end vertex. Thus $G_{c}(\mathcal{H}
_{\lambda },1)$ is a covering graph of $G(\mathcal{H}_{\lambda },1)$.
\end{proof}

One important consequence of the above construction is that 
the nodes of the covering graph $G_{c}(\mathcal{H}_{\lambda },1)$ is also
driven by a Poisson point process, albeit of a different density. 

\begin{theorem}
\label{poisson} 
The nodes of $G_{c}(\mathcal{H}_{\lambda },1)$ are distributed
according to a Poisson point process with density $\lambda
^{\prime }=\frac{\pi \lambda ^{2}}{2}$.
\end{theorem}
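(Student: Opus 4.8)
The plan is to realize the vertex set of $G_c(\mathcal{H}_\lambda,1)$ as an explicit point process derived from $\mathcal{H}_\lambda$, compute its intensity, and then address the Poisson claim. By the construction preceding the theorem, each vertex of $G_c(\mathcal{H}_\lambda,1)$ is the midpoint $\tfrac12(\mb X_i+\mb X_j)$ of an \emph{admissible} pair, i.e. an unordered pair $\{\mb X_i,\mb X_j\}\subseteq\mathcal{H}_\lambda$ with $\|\mb X_i-\mb X_j\|\le 1$; moreover the map from admissible pairs to midpoints is almost surely injective (two of the three midpoints spanned by three common points are always distinct, and four distinct points share a midpoint only on a null event). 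So the first step is to write the counting measure $N_c(B)$ of covering-graph vertices in a bounded Borel set $B\subseteq\mathbb{R}^2$ as a sum over admissible unordered pairs of points of $\mathcal{H}_\lambda$.

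For the intensity I would compute $\mathbb{E}[N_c(B)]$ using the Mecke formula for the Poisson process $\mathcal{H}_\lambda$ (equivalently, its second factorial moment measure): the expected number of \emph{ordered} pairs of distinct points $(\mb x,\mb y)$ of $\mathcal{H}_\lambda$ with $\tfrac12(\mb x+\mb y)\in B$ and $\|\mb x-\mb y\|\le 1$ equals
\[
\lambda^2\int_{\mathbb{R}^2}\!\int_{\mathbb{R}^2}\mathbf{1}\!\left[\tfrac12(\mb x+\mb y)\in B\right]\mathbf{1}\!\left[\|\mb x-\mb y\|\le 1\right]\,d\mb x\,d\mb y .
\]
Dividing by $2$ to pass to unordered pairs and making the linear change of variables $\mb s=\tfrac12(\mb x+\mb y)$, $\mb t=\mb x-\mb y$, which has unit Jacobian, the integral factors as $|B|\cdot\int_{\|\mb t\|\le1}d\mb t=\pi|B|$. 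Hence $\mathbb{E}[N_c(B)]=\tfrac{\pi\lambda^2}{2}\,|B|$, so the vertex process of $G_c(\mathcal{H}_\lambda,1)$ has intensity $\lambda'=\tfrac{\pi\lambda^2}{2}$, which is the numerical content of the theorem.

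The step I expect to be the main obstacle is upgrading this first-moment identity to the full distributional statement, i.e. that for disjoint bounded Borel sets $B_1,\dots,B_m$ the counts $N_c(B_1),\dots,N_c(B_m)$ are independent Poisson random variables with means $\lambda'|B_k|$. The delicate point is that the midpoint process is a deterministic functional of $\mathcal{H}_\lambda$: a midpoint near the common boundary of two regions depends on endpoints that may lie in a third region, and midpoints sharing an endpoint are positively associated, so independence across disjoint sets is not exact. I would attack this by invoking the complete independence of $\mathcal{H}_\lambda$ over disjoint regions together with a localization/coupling argument controlling the residual boundary coupling and the endpoint-sharing clustering, arguing these are asymptotically negligible at the length scale relevant to the percolation estimates developed later; in effect one establishes that $G_c(\mathcal{H}_\lambda,1)$ may be treated as a Poisson point process of intensity $\lambda'=\tfrac{\pi\lambda^2}{2}$ for the purposes of the subsequent analysis, which uses only the intensity and the short-range structure of this process.
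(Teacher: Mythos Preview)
Your intensity computation via the Mecke (second factorial moment) formula is correct and more direct than the paper's route. The paper instead passes to the finite model $G(\mathcal{X}_n,1)$ on a box of area $n/\lambda$, argues that each midpoint is uniformly distributed (since, conditioned on one endpoint $\mb X_i$, the midpoint of a link at $\mb X_i$ is uniform in the disc of radius $1/2$ about $\mb X_i$, and $\mb X_i$ itself is uniform in the box), and then reads off the edge density from the mean node degree $\pi\lambda$ to obtain $\lambda'=\pi\lambda^2/2$, finally letting $n\to\infty$.

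Your reservations about the full distributional claim are well placed, and they apply equally to the paper's own argument. The paper asserts that ``since the $\mb X_i$'s are independently and uniformly distributed \ldots\ so are the midpoints,'' but this inference is not valid: midpoints that share a common endpoint of $\mathcal{H}_\lambda$ are dependent, and counts in disjoint regions are coupled through endpoints lying near the boundary, so the midpoint process is \emph{not} a Poisson point process in the strict sense. What the paper's proof actually establishes is stationarity together with the intensity $\lambda'$, which is exactly what your Mecke calculation yields. Downstream the paper uses only (i) the value of $\lambda'$ and (ii) a Chebyshev-type tail bound on the midpoint count in a rectangle (the ``efficient edge'' estimate); both hold for the genuine midpoint process since its variance in a bounded region is of the same order as its mean. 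So your proposed resolution---establish the intensity rigorously and then justify the specific moment and tail estimates needed later, rather than claiming the full Poisson law---is the honest way to proceed and matches what the subsequent arguments in the paper actually require.
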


\begin{proof}
To prove that the nodes in $G_{c}(\mathcal{H}_{\lambda },1)$ are driven
by a Poisson process, it is most convenient to focus on the finite
random geometric graph $G({\cal X}_n, 1)$ where ${\cal X}_n \equiv \{{\mathbf X}_1, \ldots, {\mathbf X}_n\}$ 
are $n$ points independently and uniformly
distributed in a square ${\cal A}$ of area $n/\lambda$ in ${\mathbb R}^2$.  
Now consider the covering graph $G_c({\cal X}_n, 1)$ of $G({\cal X}_n, 1)$ 
 obtained by
the construction above.  For any $i=1, \ldots, n$, 
conditioned on the location of ${\mb X}_{i}$, 
the midpoint of any link between ${\mb X}_{i}$ and ${\mb X}_{j}$ in $G({\cal X}_n, 1)$
is uniformly distributed in the circle centered at ${\mb X}_{i}$ with radius $1/2$.
Since the ${\mb X}_i$'s are independently and uniformly distributed in
${\cal A}$, so are the midpoints, which are the same as the nodes of the 
covering graph $G_c({\cal X}_n, 1)$.

To calculate the density of $G_c({\cal X}_n, 1)$, we note that
the mean degree of any node ${\mb X}_i$ in $G({\cal X}_n, 1)$
(ignoring border effects which diminish as $n$ tends to 
infinity) is $\pi \lambda $.   The summation of the mean degrees
of all nodes  ${\mb X}_{1}, {\mb X}_{2},..., {\mb X}_{n}$ equals $n\pi
\lambda$.   Now the number of edges of the
graph should equal $1/2$ of the sumation of the degrees of all the nodes.
Thus the mean number of edges in $G({\cal X}_n, 1)$ equals $n\pi
\lambda/2$. Therefore, the density of the links in
$G({\cal X}_n, 1)$, which is
also the density of the nodes in the covering graph $G_c({\cal X}_n, 1)$,
is given by 
$$ \lambda ^{\prime } = \frac{n \pi \lambda /2}{n/\lambda} = \frac{\pi \lambda^2}{2}. $$
The theorem now follows by taking the limit as $n \rightarrow \infty$
with $\lambda$ fixed.
\end{proof}

It is important to remark that even though the nodes of the covering graph
$G_{c}(\mathcal{H}_{\lambda },1)$ are distributed according to a Poisson
process, $G_{c}(\mathcal{H}_{\lambda },1)$ is {\em not} a random geometric 
graph.  To see this, note that two nodes in $G_{c}(\mathcal{H}_{\lambda },1)$ 
being within distance 1 of each other does not imply that there exists a 
link between the nodes.  Nevertheless, $G_{c}(\mathcal{H}_{\lambda },1)$ is
a subgraph of a random geometric graph with the same node locations as 
$G_{c}(\mathcal{H}_{\lambda },1)$. 



We now derive analytical conditions on the existence and non-existence of
an infinite connected component of operational nodes in $G(\mathcal{H}_{\lambda },1,q(\cdot))$
after degree-dependent link failures in of $G(\mathcal{H}_{\lambda },1)$.
The key techniques we use are the mapping from a bond percolation 
process in the original graph to a site percolation in the covering graph,
and that link failures affect network connectivity less than node failures
at the same level.  

Our first main result gives a sufficient condition for there to exist
an infinite connected component of operational nodes in $G(\mathcal{H}_{\lambda },1,q(\cdot))$
after each link in  $G(\mathcal{H}_{\lambda },1)$ fails (independently) according to
a degree-dependent probability $q(k)$, where $k$ is the degree of the link, as given
in Definition~\ref{def:degree}.





\begin{theorem}
\label{sufficient} 
For any $\lambda_1>\lambda_c$ and $G(\mathcal{H}_\lambda,1)$ with $\lambda>\lambda_1$, there
exists $k_1< \infty$ which depends on $\lambda$ and $\lambda_1$, such that if
\begin{equation}\label{qk-upper-bound}
q(k)\leq 1-\frac{\lambda_1}{\lambda}, \quad \mbox{for all}~1\leq k \leq k_1,
\end{equation}
then with probability 1, there exists an infinite connected component in
$G(\mathcal{H}_\lambda,1,q(\cdot))$.
\label{thm:sufficient}
\end{theorem}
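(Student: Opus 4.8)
The plan is to compare $G(\mathcal{H}_\lambda,1,q(\cdot))$ with a random geometric graph at a density strictly above $\lambda_c$, using the fact that link failures are less damaging than node failures. First I would observe that if instead of failing links we failed \emph{nodes} independently, then deleting a node removes all of its incident links, so the surviving-link graph after node deletions is a subgraph of the surviving graph after the corresponding link deletions. Concretely, thin the Poisson process $\mathcal{H}_\lambda$ by retaining each node independently with probability $\lambda_1/\lambda$; the retained nodes form a Poisson process of density $\lambda_1 > \lambda_c$, so by the phase transition for continuum percolation $G(\mathcal{H}_{\lambda_1},1)$ has an infinite component almost surely. If every retained node also ``survived'' in the link-failure process in the sense that all links between two retained nodes are operational, then this infinite component would sit inside $G(\mathcal{H}_\lambda,1,q(\cdot))$ and we would be done. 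The gap is that a link between two retained nodes can still fail with probability $q(k)$, so the coupling is not immediate.

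The key technical point is the degree truncation built into the hypothesis~\eqref{qk-upper-bound}: the bound on $q(k)$ is only assumed for $1 \le k \le k_1$. So the second step is to choose $k_1$ large enough that high-degree links are rare enough not to matter. Here I would use that the link degree $d(i,j) = d_i + d_j - 2$ is a sum of two (shifted) Poisson-like node degrees with mean $\pi\lambda$ each, hence has exponential tails; one can pick $k_1 = k_1(\lambda,\lambda_1)$ so that the probability a given link has degree exceeding $k_1$ is as small as desired. Then I would set up a direct coupling: independently mark each node of $\mathcal{H}_\lambda$ ``good'' with probability $\lambda_1/\lambda$. A link $(i,j)$ is declared ``open'' in an auxiliary process if both endpoints are good \emph{and} $d(i,j) \le k_1$. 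For a link with degree $k \le k_1$ between two good nodes, the actual survival probability is $1 - q(k) \ge \lambda_1/\lambda$, so — after absorbing the endpoint-goodness and degree-threshold events appropriately — one can arrange that the set of operational links in $G(\mathcal{H}_\lambda,1,q(\cdot))$ stochastically dominates the edge set of a percolating structure built on the good nodes. The cleanest route is probably to work through the covering graph: by Theorem~\ref{poisson} the links correspond to Poisson points of density $\pi\lambda^2/2$, and the link-failure process is a site process on $G_c(\mathcal{H}_\lambda,1)$; combining the degree truncation with the $\lambda_1/\lambda$ retention gives a site process on the covering graph whose density of surviving sites exceeds the relevant threshold.

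The main obstacle is getting a clean stochastic-domination argument despite the dependence between a link's degree and its endpoints' survival, and despite the fact that $G_c(\mathcal{H}_\lambda,1)$ is not itself a random geometric graph. I expect to handle the first issue by conditioning on the point configuration $\mathcal{H}_\lambda$ first (so all degrees $d(i,j)$ become deterministic), then performing the independent node-retention and independent link-survival draws, which restores a product structure conditional on the geometry; a Peierls-type or renormalization argument on a coarse-grained lattice then upgrades ``many surviving links locally'' to ``infinite component globally,'' exactly as in standard continuum-percolation proofs. The second issue — that the covering graph is only a subgraph of a random geometric graph — actually cuts the right way for the non-existence direction but requires care here; for the existence claim I would instead keep the infinite component inside the \emph{original} graph's geometry, tracking a percolating cluster of good nodes joined by operational links directly in $G(\mathcal{H}_\lambda,1)$, and only invoke the covering-graph viewpoint to justify that the failure process has the independence structure needed for the domination. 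Choosing the constants in the right order — first $\lambda_1$, then $k_1$ depending on $\lambda,\lambda_1$, then the renormalization block size — is the bookkeeping that makes the argument go through.
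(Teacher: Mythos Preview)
Your proposal assembles the same ingredients the paper uses --- the comparison of link (bond) failure to node (site) failure at retention probability $\lambda_1/\lambda$, the covering-graph viewpoint, and a block renormalization --- but the way you combine them has a real gap.

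The issue is the role of the degree cutoff $k_1$ and its interaction with the renormalization. You propose to pick $k_1$ so that a \emph{single} link has degree exceeding $k_1$ with small probability, and then run a Peierls/block argument afterward. But this does not, on its own, let you conclude that the thinned graph $G(\mathcal{H}_{\lambda_1},1)$ with the high-degree links deleted still percolates: the links of degree $>k_1$ are not independent across space (they cluster around high-degree nodes of $\mathcal{H}_\lambda$), and an infinite path in $G(\mathcal{H}_{\lambda_1},1)$ could in principle be forced through regions containing many such links. Your auxiliary process (``both endpoints good \emph{and} $d(i,j)\le k_1$'') is therefore not obviously supercritical for any fixed $k_1$, and the phrase ``after absorbing the endpoint-goodness and degree-threshold events appropriately'' hides exactly the step that needs justification.

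The paper closes this gap by reversing the order you give in your last sentence. It first fixes the block side length $d$ large enough that the crossing events for the i.i.d.-link-failure graph $G_1(\mathcal{H}_\lambda,1)$ (hence for its covering graph) occur with probability close to $1$; only then does it \emph{define} $k_1$ as essentially twice the expected number of covering-graph nodes in the enlarged block of that size. A block is declared \emph{efficient} when the actual count is below $k_1$; inside an efficient block every link automatically has degree at most $k_1$, so the hypothesis $q(k)\le 1-\lambda_1/\lambda$ applies to \emph{every} link there and the coupling to i.i.d.\ link failure at rate $1-\lambda_1/\lambda$ is exact within the block. A block is then called \emph{open} when it is simultaneously efficient and carries the required crossings, and open blocks percolate on $d\cdot\mathbb{Z}^2$ by the usual finite-range dependent-percolation argument. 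So the renormalization is not merely the device that passes from local to global, as you describe it; its essential role is to localize the degree bound so that links of degree $>k_1$ are simply absent inside good blocks and never interfere with the coupling.
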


An interesting implication of Theorem \ref{thm:sufficient} is that even if
all links with degree larger than $k_1$ fail with probability 1, an infinite component still
exists in the remaining graph as long as~\eqref{qk-upper-bound} is satisfied.

\begin{proof}
The proof uses the concept of the covering graph, the relationship
between bond percolation and site percolation, as well as
key ideas from the proof of Theorem 1(i) in~\cite{Zhening}.
Due to space constraints, we shall only outline the key steps.

Consider two random failure models in $G(\mathcal{H}_{\lambda },1)$.  In the
first model, each node fails (with
all associated links) independently with probability $1-\frac{\lambda_1}{\lambda}$. Let
$G^{site}_1(\mathcal{H}_\lambda, 1)$ be the remaining graph.   In the second 
model, each {\em link} fails independently with probability $1-\frac{\lambda_1}{\lambda}$.
Let $G_1(\mathcal{H}_\lambda, 1)$ be the remaining graph.  Since 
 $\lambda_1>\lambda_c$, $G^{site}_1(\mathcal{H}_\lambda, 1)$ is a random geometric
 graph in the supercritical regime.  Let 
$G_{c}(\mathcal{H}_{\lambda },1,q(\cdot))$ and $
G_{1c}(\mathcal{H}_{\lambda },1)$ be the covering graphs (according
to the construction given above) of $G(\mathcal{H}_{\lambda },1,q(\cdot))$ 
and $G_{1}(\mathcal{H}_{\lambda },1)$, respectively.

In the following, we use the square lattice construction from~\cite{Zhening}.  Please
refer to~\cite{Zhening} for further details on the construction. 
Consider the square lattice $\mathcal{L}=d\cdot\mathbb{Z}^2$, where
$d$ is the edge length. 
As in~\cite{Zhening}, define event $A^{site}_a(d)$ for each horizontal edge $a$ in $\mathcal{L}$ as the set of outcomes for which
the rectangle $R_a$ is crossed\footnote{Here, a rectangle
$R=[x_1,x_2]\times[y_1,y_2]$ being crossed from left to right by a connected component in
$G^{site}_1(\mathcal{H}_\lambda,1)$ means that there exists a sequence of nodes $v_1,v_2,...,v_m\in
G^{site}_1(\mathcal{H}_\lambda,1)$ contained in $R$, with $||\mathbf{x}_{v_i}-\mathbf{x}_{v_{i+1}}||\leq
1, i=1,...,m-1$, and $0<x(v_1)-x_1<1/2, 0<x_2-x(v_m)<1/2$, where $x(v_1)$ and $x(v_m)$ are the
$x$-coordinates of nodes $v_1$ and $v_m$, respectively.  A rectangle being crossed from top to
bottom is defined analogously.} from left to right by a connected component in
$G^{site}_1(\mathcal{H}_\lambda,1)$.  If $A^{site}_a(d)$ occurs, we say that rectangle $R_a$ is a {\em good}
rectangle, and edge $a$ is a {\em good} edge. Let
$ p^{site}_g(d)\triangleq\Pr(A^{site}_a(d)).$  
Define $A^{site}_a(d)$ similarly for all vertical edges by rotating the rectangle by $90^{\circ}$. 


Further define event $B^{site}_a(d)$ for edge $a$ in $\mathcal{L}$ as the set of outcomes for which 
(1) $A^{site}_a(d)$ occurs, and (2) the left square $S_a^-$ and the right square $S_a^+$ are
both are both crossed from top to bottom by connected components
in $G^{site}_1(\mathcal{H}_\lambda,1)$.
If $B^{site}_a(d)$ occurs, we say that rectangle $R_a$ is a {\em complete} rectangle, and edge $a$ is a
{\em complete} edge. Let
$
p^{site}_c(d)\triangleq\Pr(B^{site}_a(d)).
$
Define $B^{site}_a(d)$ similarly for all vertical edges by rotating the rectangle by $90^{\circ}$. 
An example of a complete rectangle is illustrated in
Figure~\ref{fig:CompleteRectangle}.

\begin{figure}[h]
\label{B_a} \centering
\includegraphics[scale = 0.33]{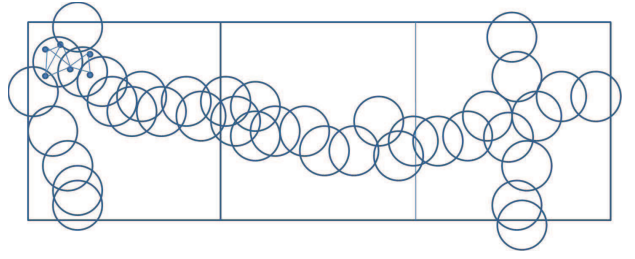}
\caption{An example of a complete rectangle.  The circles have radius 1/2.}
\label{fig:CompleteRectangle}
\end{figure}


We further define complete events $\{B'_{a}(d)\}$ 
with respect to $G_{1}(\mathcal{H}_{\lambda },1)$
in the same way as we defined 
complete events $\{B^{site}_{a}(d)\}$
with respect to $G^{site}_1(\mathcal{H}_\lambda, 1)$.   Similarly, define 
complete events $\{B''_{a}(d)\}$ with respect to
$G_{1c}(\mathcal{H}_{\lambda },1)$, the covering graph of 
$G_{1}(\mathcal{H}_{\lambda },1)$, and define 
complete events $\{B_{a}(d)\}$ with respect to
$G_{c}(\mathcal{H}_{\lambda },1,q(\cdot))$.

We shall now carry out a series of stochastic couplings to obtain our result. 
Since the critical link failure probability (for i.i.d. link failures) is greater than or equal to  
critical node failure probability (for i.i.d. node failures) in $G(\mathcal{H}_{\lambda },1)$,
we can stochastically couple $G_{1}(\mathcal{H}_{\lambda },1)$ and  
$G^{site}_{1}(\mathcal{H}_{\lambda },1)$ such that the existence of crossings 
defined in events $\{B^{site}_{a}(d)\}$ for $G^{site}_{1}(\mathcal{H}_{\lambda },1)$
implies the existence of crossings defined in events 
$\{B'_{a}(d)\}$ for $G_{1}(\mathcal{H}_{\lambda },1)$.  Next, by using the construction of
the covering graph $G_{1c}(\mathcal{H}_{\lambda },1)$, it is straightforward to verify
that the existence of crossings defined in events 
$\{B'_{a}(d)\}$ for $G_{1}(\mathcal{H}_{\lambda },1)$ implies
the existence of crossings defined in events 
$\{B''_{a}(d)\}$ for $G_{1c}(\mathcal{H}_{\lambda },1)$.\footnote{To be
absolutely precise, the definition of a crossing needs to be altered slightly here,
requiring the $x$-coordinates of the first and last node in the rectangle 
to be within 1 of the boundary, rather than 1/2.  However, this alteration 
does not affect our results below.}  

Let $p''_{c}(d) \equiv Pr(B''_{a}(d))$.  By the coupling argument above, we have
$p''_{c}(d) \geq p^{site}_{c}(d)$.  From~\cite{Zhening}, we know that 
 $p^{site}_c(d)$ converges to 1 as
$d\rightarrow \infty$ when $G^{site}_1(\mathcal{H}_{\lambda},1)$ is in the supercritical phase. 
Thus,  $p''_c(d)$ also converges to 1 as $d\rightarrow \infty$.

Define
\begin{small}
\begin{equation}\label{d-epsilon}
d(\lambda',\lambda_1)\triangleq
\inf\left\{d>4:p''_c(d)-\frac{1}{\left(\frac{d}{2}+2\right)\left(\frac{3d}{2}+2\right)
\lambda'}>1-q_0\right\},
\end{equation}
\end{small}
where $\lambda'$ is the density of $G_{c}(\mathcal{H}_{\lambda
},1)$, the covering graph of $G(\mathcal{H}_{\lambda },1)$, and 
$q_0\triangleq \frac{1}{9+2\sqrt{3}}$. Now choose the edge length of $\mathcal{L}$ as
$d=d(\lambda',\lambda_1)$.

Define event $C_a(d)$ for each horizontal edge $a$ in $\mathcal{L}$ as the set of outcomes for
which the number of nodes of $G_c(\mathcal{H}_\lambda, 1)$ in
$R_a'$ is strictly less than
\begin{equation}\label{N-1}
k_1\triangleq
2\left(\frac{d(\lambda',\lambda_1)}{2}+2\right)\left(\frac{3d(\lambda',\lambda_1)}{2}+2\right)\lambda',
\end{equation}
where 
$R_a'$ is the rectangle $R_a$ extended by 1 in all directions.
Define $C_a(d)$ similarly for all vertical edges by rotating the rectangle by $90^{\circ}$. If
$C_a(d)$ occurs, we call rectangle $R_a$ and edge $a$ {\em efficient}. Let
$
p_e(d)\triangleq\Pr(C_a(d)).
$

We say an edge $a$ in $\mathcal{L}$ is {\em open} if and only if it is both complete with respect to
$G_{1c}(\mathcal{H}_{\lambda},1)$ and efficient with respect to $G_{c}(\mathcal{H}_{\lambda
},1)$, i.e. when events $B''_a(d)$ and $C_a(d)$ both occur.  An edge is {\em closed} otherwise.


When $C_a(d)$ occurs for edge $a$ in $\mathcal{L}$, no node of $G_c(\mathcal{H}_\lambda,1,q(\cdot))$
in $R_a$ has degree strictly greater than $k_1$ in $G_c(\mathcal{H}_\lambda,1)$. 
In addition, if $q(k)$
satisfies (\ref{qk-upper-bound}), a link in $G(\mathcal{H}_\lambda,1)$ with degree $k, 1\leq k\leq k_1$
(or equivalently a node in $G_{c}(\mathcal{H}_{\lambda },1)$ with
degree $k, 1\leq k\leq k_{1}$), survives with a probability greater than or equal to
$\frac{\lambda_1}{\lambda}$ in the degree-dependent failures model. 
On the other
hand, for the independent random failures model, a link in $G(\mathcal{H}_\lambda,1)$
(or equivalently a node in $G_{c}(\mathcal{H}_{\lambda },1)$) survives with probability
exactly equal to $\frac{\lambda _{1}}{\lambda }$.   
Thus we can
couple $G_c(\mathcal{H}_\lambda,1,q(\cdot))$ with $G_{1c}(\mathcal{H}_\lambda, 1)$ so that the
existence of crossings defined in events $\{B''_a(d)\}$ for $G_{1c}(\mathcal{H}_\lambda, 1)$ implies
the existence of crossings defined in events $\{B_a(d)\}$ for
$G_c(\mathcal{H}_\lambda,1,q(\cdot))$.
Therefore, a path of open edges in $\mathcal{L}$ implies a
connected component in $G_c(\mathcal{H}_\lambda,1,q(\cdot))$.

As in~\cite{Zhening}, we find 
\begin{equation}
p_o(d) \triangleq Pr(B''_a(d) \cap C_a)
\geq p''_c(d) + p_e(d) -1,
\label{p-open}
\end{equation}
and 
\begin{equation}
p_e(d) \geq 1-\frac{1}{\left(\frac{d(\lambda',\lambda_1)}{2}+2\right)
\left(\frac{3d(\lambda',\lambda_1)}{2}+2\right)\lambda'}.
\label{p-efficient-bound}
\end{equation}
By \eqref{d-epsilon}, \eqref{p-open} and \eqref{p-efficient-bound}, we have
\begin{equation}
p_o(d) \geq p_c(d)+p_e(d)-1 > 1-q_0.
\end{equation}

Finally, we use the dual lattice technique in \cite{Zhening} to prove the
existence of an infinite open edge cluster in $\mathcal{L}$, which 
implies the existence of an infinite connected component in 
$G_{c}(\mathcal{H}_{\lambda},1,q(\cdot))$, and therefore
an infinite connected component in 
$G(\mathcal{H}_{\lambda},1,q(\cdot))$.  
\end{proof}

Theorem~\ref{thm:sufficient} provides a sufficient condition
for $G(\mathcal{H}_\lambda,1,q(\cdot))$ to have an infinite component.  The next theorem
provides a sufficient condition for
$G(\mathcal{H}_\lambda,1,q(\cdot))$ to have {\em no} infinite component. Thus, it provides a {\em
necessary} condition for $G(\mathcal{H}_\lambda,1,q(\cdot))$ to have an infinite component. 
Here again, the concept of the covering graph allows us to leverage Theorem 1(ii) of~\cite{Zhening} to prove our
new result.

\begin{theorem}
Given $G(\mathcal{H}_\lambda,1)$ with $\lambda>\lambda_c$, let $\lambda' = \pi \lambda^2/2$
be the density of the covering graph $G_c(\mathcal{H}_\lambda,1)$.  
If either
\begin{equation}\label{qk-lower-bound-1} e^{-\frac{\lambda'}{2}}+\sum_{k=1}^{\infty}
\frac{(\frac{\lambda'}{2})^k}{k!}e^{-\frac{\lambda'}{2}}q(k-1)^k>1-\frac{1}{27}
\end{equation}
when $q(k)$ is non-decreasing in $k$, or if
\begin{multline}\label{qk-lower-bound-2}
\sum_{k=1}^{\infty}\frac{\left(\frac{\lambda'}{2}\right)^k}{k!}e^{-\frac{\lambda'}{2}}\Big(
\sum_{m=0}^{\infty}\frac{[\lambda'(2\sqrt{2}+\pi)]^m}{m!}e^{-\lambda'(2\sqrt{2}+\pi)}\times\\
\left(1-q(m+k-1)^k\right)\Big)<\frac{1}{27}
\end{multline}
when $q(k)$ is non-increasing in $k$, then with probability 1, there is no infinite connected
component in $G(\mathcal{H}_\lambda,1,q(\cdot))$.
\label{thm:necessary}
\end{theorem}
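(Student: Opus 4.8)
The plan is to transport the non-existence argument of Theorem~1(ii) of~\cite{Zhening} to the covering graph. By Definition~\ref{def:covering} and the remark following Definitions~\ref{def:degree}--\ref{def:covering}, an infinite component of operational links in $G(\mathcal{H}_\lambda,1,q(\cdot))$ is literally the same object as an infinite component of operational nodes in the covering graph $G_c(\mathcal{H}_\lambda,1,q(\cdot))$, in which each midpoint is deleted independently with probability $q(k)$, $k$ being its $G_c$-degree. By Theorem~\ref{poisson} the node set of $G_c$ is a Poisson process of density $\lambda'=\pi\lambda^2/2$, and $G_c$ is a subgraph of the random geometric graph on that node set with radius $1$; in particular every $G_c$-edge has Euclidean length at most $1$. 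It therefore suffices to prove that $G_c(\mathcal{H}_\lambda,1,q(\cdot))$ has no infinite component almost surely.

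I would then invoke the lattice renormalization and dual-lattice argument of~\cite{Zhening}. There $\mathbb{R}^2$ is tiled by a square lattice and, for each cell, a \emph{blocking} event is defined so that the blocking cells percolate in the dual lattice whenever they are sufficiently likely, and a percolating family of blocking cells is incompatible with an infinite component of operational nodes in $G_c(\mathcal{H}_\lambda,1,q(\cdot))$ --- here one uses that $G_c$-edges have length at most $1$, so blocking cells form a barrier that no $G_c$-path can cross. The blocking events have only finite-range dependence, through the dependence of $G_c$-degrees on nearby points, and the quantity driving the argument is the probability $\beta$ that a fixed region of area $\tfrac12$ contains no operational node of $G_c(\mathcal{H}_\lambda,1,q(\cdot))$; the conclusion follows once $\beta>1-\tfrac1{27}$, the constant $\tfrac1{27}$ being the threshold in~\cite{Zhening} that keeps the complementary (non-blocking) process subcritical under the Peierls estimate once the finite-range dependence is absorbed.

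The core estimate is therefore the lower bound on $\beta$, and this is where the hypotheses enter. Following~\cite{Zhening}, take the region $R$ to have area $\tfrac12$ and to be placed so that, by the circles-of-radius-$1/2$ construction of $G_c$, any midpoints lying in $R$ are pairwise $G_c$-adjacent. Condition on $R$ containing exactly $k$ nodes of $G_c$, an event of probability $e^{-\lambda'/2}(\lambda'/2)^k/k!$ since $\lambda'=\pi\lambda^2/2$. Each such node then has $G_c$-degree at least $k-1$, so when $q$ is non-decreasing it is deleted with probability at least $q(k-1)$, and, the deletions being independent given the degrees, all $k$ are deleted with probability at least $q(k-1)^k$; summing over $k$ (with the $k=0$ term $e^{-\lambda'/2}$) bounds $\beta$ below by the left-hand side of~\eqref{qk-lower-bound-1}, so~\eqref{qk-lower-bound-1} forces $\beta>1-\tfrac1{27}$. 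When $q$ is non-increasing the degree lower bound is useless, so I would additionally condition on the number $m$ of $G_c$-nodes in a bounded enlargement of $R$ of area $2\sqrt2+\pi$, chosen so that every $G_c$-neighbour of a midpoint of $R$ is counted; then each of the $k$ nodes has $G_c$-degree at most $m+k-1$ and hence is deleted with probability at least $q(m+k-1)$, all $k$ are deleted with probability at least $q(m+k-1)^k$, and~\eqref{qk-lower-bound-2} is exactly the statement that the probability that $R$ contains a surviving node is below $\tfrac1{27}$, i.e.\ $\beta>1-\tfrac1{27}$ again. In either case the blocking cells percolate and $G(\mathcal{H}_\lambda,1,q(\cdot))$ has no infinite component.

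I expect the geometry of this last step to be the main obstacle: showing that the area-$\tfrac12$ region $R$ genuinely forces the midpoints inside it to form a clique in $G_c$ (which relies on the circles-of-radius-$1/2$ construction rather than mere proximity of the midpoints), pinning down the enlargement of area $2\sqrt2+\pi$ that captures every $G_c$-neighbour relevant in the non-increasing case, and verifying that the two Poisson parameters come out as $\lambda'/2$ and $\lambda'(2\sqrt2+\pi)$. Once these geometric facts are in place, the lattice construction, the blocking-implies-no-percolation step, and the subcriticality of the non-blocking process are all inherited from~\cite{Zhening}.
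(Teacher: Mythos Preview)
Your overall strategy---reduce link percolation in $G(\mathcal{H}_\lambda,1)$ to site percolation in the covering graph $G_c$, then run the blocking/Peierls argument of Theorem~1(ii) in~\cite{Zhening}---is exactly the paper's. The difference is where you run that argument. You try to run it directly on $G_c$; the paper instead passes to the random geometric supergraph $G_r(\mathcal{H}_{\lambda'},1)$ on the same midpoint set, applies Theorem~1(ii) of~\cite{Zhening} there as a black box (which is what produces the parameters $\lambda'/2$ and $\lambda'(2\sqrt2+\pi)$), and then uses the subgraph inclusion $G_c\subseteq G_r$ to pull the non-percolation conclusion back to $G_c$.

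This distinction matters precisely at the obstacle you flag. Your bound $q(k-1)^k$ in the non-decreasing case needs the $k$ midpoints falling in $R$ to be pairwise $G_c$-adjacent, so that each has $G_c$-degree at least $k-1$. But no deterministic region $R$ has that property: two links of $G(\mathcal{H}_\lambda,1)$ can have arbitrarily close midpoints while sharing no endpoint, and then the corresponding $G_c$-nodes are non-adjacent however $R$ is placed. The circles of radius $1/2$ in the construction are centred at the \emph{random} nodes of $G$, not at lattice points, so ``placing $R$ carefully'' cannot force this. Your proposed geometric verification therefore cannot succeed in $G_c$, and the degree lower bound $k-1$ is unavailable there. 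In $G_r$, by contrast, any two midpoints within a diameter-$1$ region are automatically adjacent, so the lower bound holds and Theorem~1(ii) of~\cite{Zhening} applies verbatim at density~$\lambda'$; the paper then argues that absence of an infinite component in $G_r(\mathcal{H}_{\lambda'},1,q(\cdot))$ forces the same in $G_c(\mathcal{H}_\lambda,1,q(\cdot))$. Your non-increasing case, on the other hand, is fine as written: the upper bound $d_{G_c}\le m+k-1$ only uses that $G_c$-edges have length at most $1$, which you already observed. So the fix is simply to redirect your argument through $G_r$ rather than attempting the clique property in $G_c$.
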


\begin{proof}
The existence of an infinite connected component in 
$G(\mathcal{H}_{\lambda },1,q(\cdot))$ after degree-dependent link failures in 
$G(\mathcal{H}_{\lambda },1)$
according to failure probability $q(k)$ (where $k$ is the link degree) 
is equivalent to the existence of an infinite
connected component in the covering graph $G_{c}(\mathcal{H}_{\lambda
},1,q(k))$ after degree-dependent node failures in $G_{c}(\mathcal{H}_{\lambda },1)$
according to failiure probability $q(k)$ (where $k$ is the node degree). 

In $G_{c}(\mathcal{H}_{\lambda },1)$, the nodes are Poisson distributed
with density $\lambda'$.  The length of any link is at most 1, and any two
nodes with distance strictly larger than 1 do not share a link 
(although two nodes with distance less than 1 may not necessarily 
share a link).  
Now consider the random geometric graph $G_{r}(
\mathcal{H}_{\lambda ^{\prime }},1)$ with density $\lambda ^{\prime }$ 
where the node locations are the same as those for 
$G_{c}(\mathcal{H}_{\lambda },1)$.   Let $G_{r}(
\mathcal{H}_{\lambda ^{\prime }},1,q(\cdot))$ denote the remaining graph 
after degree-dependent node failures in 
$G_{r}(\mathcal{H}_{\lambda ^{\prime }},1)$ according to failure
probability $q(k)$.  Since $G_{c}(\mathcal{H}_{\lambda },1)$ is 
a subgraph of $G_{r}(\mathcal{H}_{\lambda ^{\prime }},1)$,
we can stochastically couple $G_{r}(\mathcal{H}_{\lambda },1,q(\cdot))$
and $G_{c}(\mathcal{H}_{\lambda },1,q(\cdot))$
such that if there does not exists an infinite
connected component in $G_{r}(\mathcal{H}_{\lambda },1,q(\cdot))$, then
there also does not exist an infinite
connected component in $G_{c}(\mathcal{H}_{\lambda },1,q(\cdot))$,
By Theorem 1(ii) of \cite{Zhening}, if either~(\ref{qk-lower-bound-1}) or
(\ref{qk-lower-bound-2}) is satisfied, then 
$G_{r}(\mathcal{H}_{\lambda },1,q(\cdot))$ does not have an infinite
component, and therefore $G_{c}(\mathcal{H}_{\lambda },1,q(\cdot))$
does not have an infinite component either. 
\end{proof}

\section{{\protect\large {Cascading Link Failure}}}

\label{cascading}

Cascading failure (blackouts) in power networks often result from a small number
of initial line faults triggering many more line failures affecting the entire network. 
To understand the underlying dynamics of such phenomena, we focus on 
cascading link failures in large-scale networks modelled by random geometric graphs.

Consider a network modelled by a random geometric graph
$G(\mathcal{H}_{\lambda},1)$ with $\lambda>\lambda_c$, where an initial failure seed is represented by a single failed
link. We are
interested in whether this initial small shock can lead to a global cascade of link failures, defined as follows.

\begin{definition}
A cascading link failure is an ordered sequence of link failures triggered by an
initial failure seed resulting in an infinite component of failed links in
the networks.
\end{definition}


To describe cascading failures, we use the following simple but descriptive model similar to
the model proposed in~\cite{Zhening} for cascading node failures.   Assume that there
is a ``susceptibility threshold" $\psi_l \in [0,1]$ associated with each link $l$, 
where the $\psi_l$'s are i.i.d. random variables with probability density function $f(\psi)$. 
Due to local redistribution of power flow load after link failures, each link $l$ fails if a given fraction $\psi_l$ of its
neighboring links (the links which sharing an end node with $l$)
have failed.  The order of the failure sequence is then the topological order determined by
the location of the initial link failure and the threshold $\psi_l$ of each node $l$.   

Note that the initial link failure can grow only when some neighboring link, say $m$, of the initial
failure seed has a threshold satisfying $\psi_m\leq \frac{1}{k_m}$, where $k_m\geq 1$ is the
link degree of $m$.  Such a link is called \emph{vulnerable}. The probability of a link being vulnerable
is
\begin{equation}\label{rho-k}
\rho_k = F_{\psi}\left(\frac{1}{k}\right)=\int_{0}^{\frac{1}{k}} f(\psi)d\psi,
\end{equation}
where $F_{\psi}(\cdot)$ is the cumulative distribution function of $\psi_l$. 
When the initial failure seed is directly connected to a component of vulnerable links, all links in this component
fail.  Thus, a cascade of failed links forms
when the network has an infinite component of vulnerable links and the initial failure seed is either inside this
component or adjacent to (share a link with) some link in this component.

On the other hand, if link $l$ has a threshold satisfying $\psi_l> \frac{k_l-1}{k_l}$, where $k_l$
is the link degree of $l$, then link $l$ will not fail as long as at least one neighboring link is operational.
Such a link is called \emph{reliable}. Otherwise, if $\psi_l \leq \frac{k_l-1}{k_l}$,  we call link
$l$ \emph{unreliable}. For $k\geq 1$, the probability of a node being reliable is given by
\begin{equation}\label{sigma-k}
\sigma_k = 1-F_{\psi}\left(\frac{k-1}{k}\right)=\int_{\frac{k-1}{k}}^1 f(\psi)d\psi.
\end{equation}
Note that when two reliable links share an end node and neither is an initial
failure seed, no matter what else happens in the network, they remain operational.

The following two theorems present our main results on cascading link failure in random geometric graphs.
The results depend crucially on the degree-dependent link failure results in Theorems~\ref{thm:sufficient}
and~\ref{thm:necessary}.  

\begin{theorem}
For any $\lambda_1>\lambda_c$ and $G(\mathcal{H}_{\lambda},1)$
with $\lambda>\lambda_1$, there exists $k_1<\infty$ depending on $\lambda$ and $\lambda_1$ such
that if
\begin{equation}\label{eq:F-psi}
F_{\psi}\left(\frac{1}{k_1}\right) \geq\frac{\lambda_1}{\lambda},
\end{equation}
then with probability 1, there exists an infinite component of vulnerable links in
$G(\mathcal{H}_{\lambda},1)$. Moreover, if the initial link failure is inside this component or
adjacent to some link in this component, then with probability 1, there is a cascading link failure in
$G(\mathcal{H}_{\lambda},1)$.
\end{theorem}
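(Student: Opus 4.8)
The plan is to obtain this result as a corollary of Theorem~\ref{thm:sufficient} by recasting link vulnerability as a degree-dependent link failure process. First I would define, for each link of degree $k$, the failure probability $q(k) \triangleq 1 - \rho_k = 1 - F_\psi(1/k)$, with $\rho_k$ as in~\eqref{rho-k}. Since the susceptibility thresholds $\{\psi_l\}$ are i.i.d.\ and independent of $\mathcal{H}_\lambda$, conditioned on $\mathcal{H}_\lambda$ each link of degree $k$ is vulnerable, independently of all others, with probability exactly $\rho_k$; hence the subgraph of vulnerable links in $G(\mathcal{H}_\lambda,1)$ has precisely the law of $G(\mathcal{H}_\lambda,1,q(\cdot))$ for this choice of $q$. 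Thus the first assertion reduces to showing that $G(\mathcal{H}_\lambda,1,q(\cdot))$ has an infinite component with probability $1$.

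Next I would take $k_1 = k_1(\lambda,\lambda_1) < \infty$ to be exactly the constant furnished by Theorem~\ref{thm:sufficient}, and check that the hypothesis~\eqref{eq:F-psi} implies condition~\eqref{qk-upper-bound} for this $q$. The key observation is a monotonicity one: $k \mapsto F_\psi(1/k)$ is non-increasing (because $1/k$ decreases in $k$ and $F_\psi$ is a CDF), so $q(k) = 1 - F_\psi(1/k)$ is non-decreasing in $k$. Consequently, if $F_\psi(1/k_1) \geq \lambda_1/\lambda$, then for every $1 \leq k \leq k_1$ we have $q(k) \leq q(k_1) = 1 - F_\psi(1/k_1) \leq 1 - \lambda_1/\lambda$, which is exactly~\eqref{qk-upper-bound}. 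Theorem~\ref{thm:sufficient} then yields an infinite connected component in $G(\mathcal{H}_\lambda,1,q(\cdot))$ almost surely, i.e., an infinite component of vulnerable links in $G(\mathcal{H}_\lambda,1)$.

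For the ``moreover'' part I would show that once the seed touches this infinite vulnerable component $\mathcal{C}$, the failure propagates through all of $\mathcal{C}$. If the initial failed link lies in $\mathcal{C}$, it is already failed; if instead it is adjacent to $\mathcal{C}$, sharing an end vertex with some link $m \in \mathcal{C}$, then $m$ has a failed neighboring link, so the fraction of $m$'s neighbors that have failed is at least $1/k_m \geq \psi_m$ since $m$ is vulnerable, and hence $m$ fails. In either case at least one link of $\mathcal{C}$ fails. Arguing inductively along paths in $\mathcal{C}$: whenever a link of $\mathcal{C}$ fails, each of its neighbors within $\mathcal{C}$ then has a failed neighbor and, being vulnerable, also fails. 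Since $\mathcal{C}$ is connected and infinite, every link of $\mathcal{C}$ fails after finitely many steps along a path from the seed, producing an infinite component of failed links in $G(\mathcal{H}_\lambda,1)$, which is a cascading link failure by definition.

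I expect the only point requiring care to be the bookkeeping in the last step: one must verify that the failure ordering is well-defined (each link of $\mathcal{C}$ fails after finitely many propagation steps) and that no links outside $\mathcal{C}$ are needed to trigger these failures, so the conclusion does not depend on the behavior of non-vulnerable links elsewhere in the network. Everything else is a direct application of Theorem~\ref{thm:sufficient} together with the monotonicity of $k \mapsto F_\psi(1/k)$.
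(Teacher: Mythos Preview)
Your proposal is correct and follows essentially the same approach as the paper: recast vulnerability as a degree-dependent link failure with $q(k)=1-\rho_k$ and invoke Theorem~\ref{thm:sufficient}. In fact you supply details the paper omits, namely the monotonicity check that~\eqref{eq:F-psi} implies~\eqref{qk-upper-bound} and the inductive propagation argument for the ``moreover'' clause; the paper simply states that Theorem~\ref{thm:sufficient} applies directly.
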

\begin{proof}
We view the problem as a degree-dependent link failure problem where a
vulnerable link is considered ``operational" and a non-vulnerable link is considered a ``failure."
In this model, each link with degree $k$ fails with probability $1-\rho_k$.  Applying
Theorem \ref{thm:sufficient} directly, we obtain the result.
\end{proof}

\begin{theorem}
For any $G(\mathcal{H}_{\lambda},1)$ with $\lambda>\lambda_c$, if
\begin{multline}\label{sigma-upper-bound}
\sum_{k=1}^{\infty}\frac{\left(\frac{\lambda'}{2}\right)^k}{k!}e^{-\frac{\lambda'}{2}}
\sum_{m=0}^{\infty}\frac{[\lambda' (2\sqrt{2}+\pi)]^m}{m!}e^{-\lambda'
(2\sqrt{2}+\pi)}\times\\\left(1-\left[1-F_{\psi}\left(\frac{m+k-2}{m+k-1}\right)\right]^k\right)<\frac{1}{27},
\end{multline}
where $\lambda' = \pi\lambda^2/2$ and $F_{\psi}(-\infty)=0$ by convention, 
then with probability 1, there is no infinite component
of unreliable links. As a consequence, with probability 1, there is no cascading link failure in
$G(\mathcal{H}_{\lambda},1)$ no matter where the initial link failure is.
\end{theorem}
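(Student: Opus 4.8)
The plan is to recast the statement as a degree-dependent link failure problem, apply Theorem~\ref{thm:necessary}, and then use a short deterministic argument to rule out cascades.

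\emph{Step 1: reduction to Theorem~\ref{thm:necessary}.} I would regard a link as ``failed'' in the degree-dependent sense precisely when it is \emph{reliable}. By~\eqref{sigma-k} a link of degree $k$ is reliable with probability $q(k)\triangleq\sigma_k=1-F_{\psi}\!\left(\tfrac{k-1}{k}\right)$, so the operational links of this auxiliary process are exactly the unreliable links. Since $\tfrac{k-1}{k}$ is nondecreasing in $k$ and $F_{\psi}$ is a c.d.f., $q(k)=\sigma_k$ is nonincreasing in $k$, so the applicable branch of Theorem~\ref{thm:necessary} is the one governed by~\eqref{qk-lower-bound-2}. Substituting $q(m+k-1)=\sigma_{m+k-1}=1-F_{\psi}\!\left(\tfrac{m+k-2}{m+k-1}\right)$ into~\eqref{qk-lower-bound-2} turns it, term by term, into~\eqref{sigma-upper-bound}, the convention $F_{\psi}(-\infty)=0$ handling the degenerate $m=0,\,k=1$ term. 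Hence~\eqref{sigma-upper-bound} is precisely the hypothesis of Theorem~\ref{thm:necessary} for this choice of $q(\cdot)$, and we conclude that with probability $1$ there is no infinite connected component of unreliable links.

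\emph{Step 2: no cascade.} It then remains to prove, deterministically, that the absence of an infinite component of unreliable links forbids a cascading link failure for \emph{every} choice of initial seed; I argue by contraposition. Fix a seed link $s$ and suppose the cascade it initiates yields an infinite component $C$ of failed links. Run the cascade in rounds: $F_0=\{s\}$, and $F_{t+1}$ is obtained from $F_t$ by adding every link $l\notin F_t$ that has a neighbour in $F_t$ and has at least a fraction $\psi_l$ of its neighbours in $F_t$; then $C=\bigcup_{t\ge 0}F_t$ and each $l\in C$ has a finite failure round $\tau(l)$. For every $l\in C\setminus\{s\}$ choose a neighbour $\phi(l)\in F_{\tau(l)-1}$; the edges $\{l,\phi(l)\}$ form a spanning tree $T$ of $C$ rooted at $s$, and $T$ is locally finite because every link has finitely many neighbours. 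Since $C$ is infinite, K\"onig's lemma produces an infinite self-avoiding ray $s=l_0,l_1,l_2,\dots$ in $T$, along which $\tau(l_0)<\tau(l_1)<\cdots$. Each $l_i$ with $i\ge 1$ is then unreliable: when $l_i$ failed, its neighbour $l_{i+1}$ had not yet failed (as $\tau(l_{i+1})>\tau(l_i)$), so at most $k_{l_i}-1$ of the $k_{l_i}$ neighbours of $l_i$ were in $F_{\tau(l_i)-1}$, and the failure rule forces $\psi_{l_i}\le\tfrac{k_{l_i}-1}{k_{l_i}}$, i.e.\ $l_i$ is unreliable by definition. Thus $\{l_1,l_2,\dots\}$ is an infinite connected set of unreliable links, contradicting Step~1. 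As $s$ was arbitrary, with probability $1$ there is no cascading link failure, no matter where the initial failure is located.

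I expect the main obstacle to lie in Step~2 --- making the cascade dynamics precise (failure rounds, and the stipulation that a link fails only once a neighbour has, so that $s$ is the unique root of $T$) so that the K\"onig ray is guaranteed to be a path of genuinely unreliable links, and dispatching the small degenerate cases (links of degree $\le 1$, thresholds at the ends of $[0,1]$, the $m=0,\,k=1$ term). By comparison, Step~1 --- the monotonicity of $\sigma_k$ and the algebraic matching of~\eqref{qk-lower-bound-2} with~\eqref{sigma-upper-bound} --- is routine.
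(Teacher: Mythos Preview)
Your Step~1 is exactly the paper's argument: declare reliable links ``failed,'' observe $q(k)=\sigma_k$ is nonincreasing, and read~\eqref{sigma-upper-bound} off from~\eqref{qk-lower-bound-2} in Theorem~\ref{thm:necessary}. Nothing to add there.

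Your Step~2 differs from the paper. The paper does not give a direct argument; it passes to the covering graph $G_c(\mathcal{H}_\lambda,1)$ (so that the link cascade becomes a node cascade) and then invokes the proof of Theorem~2(ii) in~\cite{Zhening} to conclude that the absence of an infinite component of unreliable nodes blocks any cascade. Your route is more elementary and self-contained: you build the parent tree $T$ of the cascade via the round function $\tau$, apply K\"onig's lemma to extract an infinite ray with strictly increasing $\tau$, and then observe that every non-seed vertex on that ray has an unfailed neighbour at its failure time and is therefore unreliable. This is correct; the only points to make airtight are (i) that $T$ is indeed a tree spanning the failed set (which follows because $\tau$ strictly decreases along $\phi$ and eventually hits $0$ at $s$), and (ii) local finiteness, which holds almost surely since node degrees in $G(\mathcal{H}_\lambda,1)$ are a.s.\ finite, hence so are link degrees. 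Compared with the paper's approach, yours avoids the detour through the covering graph and the dependence on~\cite{Zhening}, at the cost of having to spell out the cascade dynamics carefully (as you anticipated). Either way, the substantive content is the same implication ``no infinite unreliable component $\Rightarrow$ no cascade''; your K\"onig argument is a clean way to realize it.
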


\begin{proof}
We apply Theorem~\ref{thm:necessary}. Regard an
unreliable link as ``operational" and a reliable link as a ``failure". Then, $\sigma_k$
becomes the failure probability $q(k)$ in
the context of Theorem~\ref{thm:necessary}. Since $\sigma_k$ is
non-increasing in $k$, we replace $q(m+k-1)$ in~\eqref{qk-lower-bound-2} with
$\sigma_{m+k-1}=1-F_{\psi}\left(\frac{m+k-2}{m+k-1}\right)$ and obtain~\eqref{sigma-upper-bound}.
Thus, by Theorem~\ref{thm:necessary}, when~\eqref{sigma-upper-bound} holds, with
probability 1, there is no infinite component of unreliable links in the network.
Next, using the covering graph and techniques from the proof of Theorem 2(ii) in~\cite{Zhening},
we show that if there is no infinite
component of unreliable links, then with probability 1, there is no cascading link failure no matter
where the initial link failure is.
\end{proof}

\section{{\protect\large {Conclusion}}}

\label{conclusion}

In this paper, we studied the problem of cascading link failures in the power grid from a percolation-based viewpoint.  To reflect the fact that links fail according to the amount of power flow going through them, we introduced a model where links fail according to a probability which depends on the number of neighboring links.  We introduced a mapping which maps links in a random geometric graph to nodes in a corresponding dual covering graph.  This mapping enabled us to obtain the first-known analytical conditions on the existence and non-existence of a large component of operational
links after degree-dependent link failures.   Next, we presented a simple but descriptive model
for cascading link failure, and used the degree-dependent link failure results to obtain the first-known analytical conditions on the 
existence and non-existence of cascading link failures.  In particular, we revealed the important roles played by vulnerable and reliable 
links.

\bibliographystyle{ieeetr}
\bibliography{./Power_Link_Failure,./PercolationTopic2}

\end{document}